\newif\ifFull
\newif\ifOrphan
\renewcommand{\emph}[1]{\textit{\textbf{#1}}}
\newcommand{\R}{\mathbf{R}}
\newcommand{\Eta}{\mathrm{H}}
\let\epsilon\varepsilon
\title{Entropy-Bounded Computational Geometry \\ Made Easier and 
       Sensitive to Sortedness\thanks{This research was supported in part by NSF grant 2212129.}}
\date{}
\author{David Eppstein 
\and Michael T. Goodrich 
\and Abraham M. Illickan 
\and Claire A. To \\[-6pt]
\and
Department of Computer Science, University of California, Irvine}
\author{Anonymous author(s)}
\begin{document}

\maketitle

\begin{abstract}
We study entropy-bounded computational geometry, that is, 
geometric algorithms
whose running times depend on a given 
measure of the input entropy.
Specifically, we introduce a measure that we call \emph{range-partition entropy}, which
unifies and subsumes previous definitions of entropy used for sorting
problems and structural entropy used in computational geometry.
We provide simple algorithms for several problems, including 2D maxima,
2D and 3D convex hulls, and some visibility problems, and we show that they have running times depending on the range-partition entropy. 
\end{abstract}

\section{Introduction}
\emph{Beyond worst-case algorithm design} is directed at designing algorithms whose running
time is asymptotically the best possible 
relative to some metric
of the input instance~\cite{roughgarden2019beyond,roughgarden2021beyond}, 
and there has been considerable work done, for example,
on instance-optimal sorting algorithms;
see, e.g.,~\cite{munro18,juge24,gelling,auger2019,buss19,%
takaoka2009partial,schou2024persisort,BARBAY2013109}.
Focusing on the sorting problem for a moment,
let $X=[x_1,x_2,\ldots,x_n]$ be an input sequence of distinct elements that 
come from a total order, 
and let $\mathcal{R}=\{R_1,R_2,\ldots,R_{\rho(X)}\}$ 
be a division of~$X$ into a set of maximal increasing or decreasing runs
(i.e., consecutive elements in $X$).
These previous papers define a type of
\emph{entropy}, $\Eta({X})$,
for a set of elements, $X$, divided into runs as 
follows~\cite{munro18,juge24,gelling,auger2019,buss19,%
takaoka2009partial,schou2024persisort,BARBAY2013109}:
\[
\Eta({X}) = - \sum_{i=1}^{\rho(X)} 
               \left(\frac{|R_i|}{n}\right) \log \left(\frac{|R_i|}{n}\right).
\]
Sequential comparison-based sorting has a runtime lower bound of $\Omega(n(1+\Eta({X})))$,
and many instance-optimal sorting algorithms 
have running time $O(n(1+\Eta({X})))$,
which can be as small as $O(n)$ depending on the input instance.
For example, 
Auger, Jug{\'e}, Nicaud, and Pivoteau~\cite{auger2019} show that 
the popular TimSort algorithm has this time, and similarly
efficient stack-based mergesort algorithms have been given by
Munro and Wild~\cite{munro18},
Jug{\'e}~\cite{juge24},
Gelling, Nebel, Smith, and Wild~\cite{gelling},
Buss and Knop’~\cite{buss19},
Takaoka~\cite{takaoka2009partial},
and Barbay and Navarro~\cite{BARBAY2013109}.
At CCCG'24,
Schou and Wang introduce PersiSort~\cite{schou2024persisort},
which also has this running time.
Still, we are not aware of any work on geometric problems beyond sorting that concern instance optimality 
with respect to run-based entropy.

%
Instead, given an input, $S$, to a geometric problem,
such as convex hull or maxima set,
Afshani, Barbay, and Chan~\cite{afshani} introduce the
\emph{structural entropy}, $\Eta(S)$ for $S$,
and they provide algorithms that run in
$O(n(1+\Eta(S)))$ time for several such problems, where $n$
is the size of the input, including 2D maxima and 2D and 3D convex hulls.
In contrast to the entropy used for the sorting problem,
however, structural entropy explicitly ignores any near-sortedness
in the input, such as can occur, e.g., with 
convex polygons or monotone polygonal chains.

\paragraph{Our Results.}
In this paper,
we introduce a unification of the entropy used for sorting and the structural 
entropy of Afshani {\it et al.}~\cite{afshani}, which we call 
\emph{range-partition entropy}. 
This unification applied even for problems that do not have obvious
structural entropy: for instance, in the visibility and lower-envelope problems
we study, 
range-partition entropy
equals the entropy of an embedded sorting problem.
In other cases, such as for 3D convex hulls, sortedness provides no obvious
advantage, and case
range-partition entropy
reduces to the structural entropy of Afshani
{\it et al}.

We provide simple algorithms whose running times
depend on the range-partiton entropy,
$\Eta(S)$, of an input set of points, $S$.
For example, we give algorithms for computing maxima sets
and convex hulls for $n$ points in the plane in
$O(n(1+\Eta(S)))$ time. These algorithms perform no worse than any other algorithm for any input on their respective worst permutations, while also taking advantage of sortedness in the input.
We also 
show how to adapt any instance-optimal natural mergesort to compute the lower
envelope of monotone polygonal chains or the visibility polygon
of a point inside a convex polygon with convex holes in
time that depends on the 
range-partition entropy.
In addition, we give a simple randomized algorithm for computing
the convex hull of $n$ points in $\R^3$ and show its expected running time
to be $O(n(1+\Eta(S)))$.

\section{Range-Partition Entropy}
Let $S=(p_1,p_2,\ldots,p_n)$ be the set of $n$ points in $\R^d$ input
in this given order, for constant $d\ge 1$.
Define a \emph{range partition} of $S$ to be a set, 
$\Pi=\{(S_1,R_1),(S_2,R_2),\ldots,(S_t,R_t)\}$, such that 
\begin{enumerate}
\item
The $S_i$'s form a partition of $S$, i.e., $S=\bigcup_{i=1}^t S_i$
and $S_i\cap S_j=\emptyset$ for $i\not=j$.
\item
The $R_i$'s are geometric \emph{ranges}, 
such as intervals in $\R$,
axis-aligned rectangles in $\R^2$, triangles in $\R^2$, or
tetrahedra in $\R^3$, such that
the set, $S_i$, is contained in the range, $R_i$.
\end{enumerate}
For example, if $S$ is a set of points in $\R^2$, 
then $\Pi$ could be a partition of $S$ into subsets contained 
in a set of axis-aligned rectangles, which form the ranges.

Given a sequence, $S$, of 
points in $\R^d$, and a range partition, 
$\Pi=\{(S_1,R_1),\ldots,(S_t,R_t)\}$, for $S$, we say
that $\Pi$ is \emph{respectful} if it satisfies the following
constraints:
\begin{enumerate}
\item
For each $i=1,2,\ldots, t$, $(S_i,R_i)$ satisfies
a given \emph{local property}, which depends only on $S_i$ and $R_i$.
\item
For each $i=1,2,\ldots, t$, $(S_i,R_i)$ satisfies
a given \emph{global compatibility} property, 
which can depend on the other pairs, $(S_j,R_j)$, for $j\not= i$.
\end{enumerate}

Given a set, $S$, of $n$ points in $\R^d$,
the \emph{entropy}, $\Eta(\Pi)$, of a partition,
$\Pi=\{(S_1,R_1),\ldots,(S_t,R_t)\}$,
of~$S$, is 
\[
\Eta({\Pi}) = - \sum_{i=1}^{t} 
               \left(\frac{|S_i|}{n}\right) \log \left(\frac{|S_i|}{n}\right).
\]
The \emph{range-partition entropy}, $\Eta(S)$,
of $S$ is the minimum $\Eta(\Pi)$ 
over all respectful partitions, $\Pi$.

We reformulate the sorting problem in this framework
by considering the input sequence, $S=(x_1,x_2,\ldots,x_n)$, 
to be points in $\R$, and we define a partition,
$\Pi=\{(S_1,R_1),\ldots,(S_t,R_t)\}$,
where each range, $R_i$, is an interval, $[a,b]\subset \R$.
In this case, the local property of each $(S_i,R_i)$ is that $S_i$ is
a consecutive subsequence of elements in $S$
given in sorted order, and the global property
is that the $R_i$ ranges are disjoint, i.e., $R_i\cap R_j=\emptyset$
for $i\not=j$.
Accordingly,
the minimum entropy, $\Eta(\Pi)$, for all respectful partitions,
$\Pi$, is determined by a partition of the input sequence into maximal
non-decreasing or non-increasing runs.
Thus, our framework subsumes the notion of entropy used
for the sorting problem.
Also, as we discuss in more detail in subsequent sections,
it also subsumes the structural entropy introduced
by Afshani {\it et al.}~\cite{afshani}.


\section{2D Maxima Set}
The problem we study in this section
is to find the maxima set of a given set of points in the
plane, where a point is considered \emph{maximal} if no other
point \emph{dominates} it, having both greater $x$- and $y$-coordinates.
Our algorithm for this problem 
is a simple a variant of Kirkpatrick and Seidel's
``marriage-before-conquest'' method~\cite{10.1145/323233.323246},
which was also studied by Afshani {\it et al.}~\cite{afshani},
except our algorithm takes advantage of near-sortedness.  

As preprocessing, we find the point $p_{\max}$
in $S$ with maximum $x$-coordinate, guaranteed to be a maximum point. We prune from $S$ any point dominated
by $p_{\max}$.
Our remaining algorithm
begins by checking in linear time 
if the input is sorted (e.g., by $x$- or $y$-coordinates).
If so, it computes the maxima set in linear time by a simple plane-sweeping
stack algorithm.
Otherwise, in linear time, we
partition the points into left and right subsets
based on the median $x$-coordinate using a stable method, we find and add to the maxima set
a point with largest
$y$-coordinate in the right subset, and we remove all points dominated by this point.
Then our
algorithm recursively solves the maxima set problem
for the remaining points in the left and right subsets.
See Algorithm~\ref{alg:maxima} in Appendix A.


Given an input sequence of $n$ points, $S$, in $\R^2$,
define the constraints for 
a partition, $\Pi=\{(S_1,R_1),\ldots,(S_t,R_t)\}$, where
each $R_i$ is an axis-aligned rectangle, to be
\emph{respectful} in the 
context of computing the maxima set for $S$
(Figure~\ref{fig:rect}):
\begin{enumerate}
\item
The \emph{local property} for each $(S_i,R_i)$ is that $R_i$
is an axis-aligned rectangle containing $S_i$
and either $S_i$
forms a sorted subsequence in $S$ or the upper right corner of $R_i$ is dominated by some point in $S$.
\item
The \emph{global compatibility} property is that,
for $i,j=1,2,\ldots,t$, if $R_i$ is 
in not dominated by a single point in $S$ (which means $S_i$ is in sorted order in $S$), then it
will not intersect another range, $R_j$, for $j\not= i$.
\end{enumerate}

This generalizes the notion of structurally respectful partitions from Afshani {\it et al.}~\cite{afshani}, which is equivalent to 
the local condition above for unsorted subsets.
It is easy to construct inputs whose structural entropy is much higher than their range-partition 
entropy, such as a set of $n$ maxima points given in sorted order.
%
%

\begin{figure}[hbt]
    \centering
\includegraphics[width=0.5\linewidth,page=1]{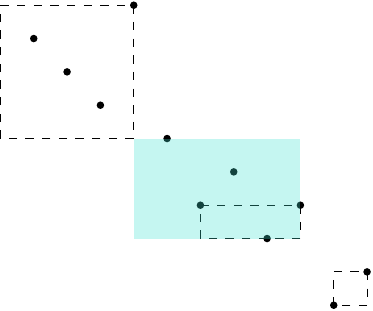} 
\\ (a) \\
\includegraphics[width=0.5\linewidth,page=2]{figures/maxima-partitions.pdf} 
\\ (b)
\caption{Respectful partitions.
The points in the blue shaded rectangle are sorted among themselves.
(a) A respectful partition 
 without making use of the sorted set. 
The entropy is $\frac{4}{11}\log\frac{11}{4}+\frac{1}{11}\log11+\frac{1}{11}\log11+\frac{3}{11}\log\frac{11}{3}+\frac{2}{11}\log\frac{11}{2}\approx2.118$.
(b) A respectful partition making use of sets of both types. 
The entropy is $\frac{4}{11}\log\frac{11}{4}+\frac{5}{11}\log\frac{11}{5}+\frac{2}{11}\log\frac{11}{2}\approx1.495$.
}
        \label{fig:rect} 
\end{figure}

 

\begin{theorem}
Given a sequence, $S$, of $n$ points in $\R^2$,
the \textsc{2DMaximaSet} algorithm runs in $O(n(1+\Eta(S)))$ time,
where $\Eta(S)$ is the range-partition entropy of~$S$.
\end{theorem}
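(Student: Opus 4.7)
The plan is to fix an optimal respectful partition $\Pi=\{(S_1,R_1),\ldots,(S_t,R_t)\}$ of $S$ with $\Eta(\Pi)=\Eta(S)$ and charge the algorithm's work to the cells of $\Pi$. Each recursive call spends time linear in its subproblem size (sortedness check, stable median-$x$ split, locating the top-$y$ point in the right half, pruning), so the total running time is proportional to
\[
\sum_{p\in S}\#\{\text{recursive subproblems containing }p\}.
\]
I will show that every point $p\in S_i$ appears in at most $O(1+\log(n/|S_i|))$ subproblems, after which summing yields
\[
\sum_{i=1}^{t}|S_i|\cdot O\bigl(1+\log(n/|S_i|)\bigr)=O\bigl(n(1+\Eta(\Pi))\bigr)=O\bigl(n(1+\Eta(S))\bigr),
\]
as required.

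To bound the per-point count, I split on which branch of the local property is witnessed by $(S_i,R_i)$. In the case where $R_i$'s upper-right corner is dominated by some $q\in S$, I reuse the Kirkpatrick--Seidel/Afshani--Barbay--Chan argument: median-$x$ splits roughly halve subproblem sizes, so after $O(\log(n/|S_i|))$ levels the subproblem containing $p$ has $O(|S_i|)$ points, and within a few additional levels $q$ is selected as a maximum and the pruning step discards all of $R_i$ (in particular $p$). In the case where $S_i$ is a sorted subsequence of $S$, the global compatibility property makes $R_i$ disjoint from every other $R_j$. The key structural claim is that after $O(\log(n/|S_i|))$ recursion levels the subproblem containing $p$ consists only of points from $S_i$; since the median-$x$ partition is stable, those points still appear in their original sorted order, so the algorithm's sortedness check fires and resolves the subproblem in $O(|S_i|)$ time.

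The main obstacle is justifying the structural claim in the sorted case. A straightforward density argument shows the subproblem shrinks to $O(|S_i|)$ points in $O(\log(n/|S_i|))$ levels, but I additionally need that the surviving points are exactly those of $S_i$. This requires combining two effects enabled by the disjointness of $R_i$ from every $R_j$: median-$x$ splits eventually confine the subproblem's strip to lie within $R_i$'s $x$-interval, eliminating every $R_j$ whose $x$-range is separated from $R_i$; and for the remaining cells $R_j$ that overlap $R_i$ in $x$ but are separated in $y$, the maxima discovered along the way prune any surviving non-$S_i$ points (those lying below $R_i$ are dominated by a maximum found in $S_i$ or in a cell above $R_i$; those lying above are themselves selected as tops of right halves, and their discovery triggers pruning of anything in the strip that they dominate). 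The intermediate lemma to extract is that, once the subproblem's $x$-strip is contained in $R_i$'s, at most $O(\log(n/|S_i|))$ further levels of alternating split-and-prune purge every non-$S_i$ survivor.

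With both cases in hand, and using that preprocessing by $p_{\max}$ and the final sorted-case sweep together add only $O(n)$ to the work outside the recursion, the two per-point bounds combine with the entropy sum above to give the claimed $O(n(1+\Eta(S)))$ running time.
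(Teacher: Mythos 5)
Your accounting skeleton is the same as the paper's (fix an entropy-minimizing respectful partition, charge each cell $S_i$ at most $O(|S_i|(1+\log(n/|S_i|)))$ work), but the statement you reduce everything to --- that \emph{every point} $p\in S_i$ appears in at most $O(1+\log(n/|S_i|))$ recursive subproblems --- is stronger than what is needed and is not true in general. Boundary effects defeat it: a point of a large cell (say $|S_i|=n/2$, so your bound would be $O(1)$) can survive $\Theta(\log n)$ levels if its recursion node keeps straddling an endpoint of $R_i$'s $x$-range and the maximal point discovered at each split happens to lie to its left; one can build such inputs by placing a decreasing chain of high ``shield'' points above $R_i$ and to the left of $p$, whose discovery repeatedly prunes the rest of $S_i$ but never $p$, while the single point dominating $R_i$'s corner sits to the right and is never selected. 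What is true, and what the paper actually proves, is the aggregate, per-level statement: at most $O(\min\{|S_k|,\lceil n/2^j\rceil\})$ points of $S_k$ survive to level $j$, because every node whose interval is spanned by (the relevant piece of) $R_k$'s $x$-range has all of $S_k$ pruned, and only $O(1)$ nodes per level can straddle that range's endpoints; summing $\min\{|S_k|,O(n/2^j)\}$ over $j$ gives $O(|S_k|(1+\log(n/|S_k|)))$. So the per-point formulation has to be replaced by this level-wise count --- the final sum still works, but the lemma you propose to prove is false as stated.

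The sorted case has a second, independent gap. Your ``key structural claim'' --- that once the subproblem's strip lies inside $R_i$'s $x$-interval, $O(\log(n/|S_i|))$ further split-and-prune levels purge every non-$S_i$ survivor so that the sortedness check fires --- is precisely what needs proof, and it is doubtful as stated: global compatibility only forbids other \emph{ranges} from intersecting $R_i$, so arbitrarily many points of other cells may lie above $R_i$ inside its $x$-range; they are not ``selected as tops of right halves'' at any controlled rate (one maximal point is discovered per subproblem per level, and a non-maximal point above $R_i$ disappears only when a dominating maximum is found in \emph{its} subproblem), and points of $S_i$ dominated by such points never reach a pure-$S_i$ sorted subproblem at all. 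The paper sidesteps this by splitting the sorted cell at the rightmost point $p$ lying above $R_k$: the left piece $S_k^{(\ell)}$ (dominated by $p$) is entirely pruned by the time its $x$-range covers a node, while the right piece $S_k^{(r)}$ has no points of $S$ above it, so when its $x$-range covers a node the surviving subproblem is a sorted run (thanks to the stable split) and the linear-time base case terminates that branch; each piece then satisfies the same $O(\min\{|\cdot|,\lceil n/2^j\rceil\})$ level bound and the two bounds add. Without this (or an equivalent) decomposition of the sorted cells, and without switching from per-point to per-level counting, your outline does not yet yield the theorem.
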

\begin{proof}
Let us analyze 
2DMaximaSet via an accounting argument
where a constant amount of work in our algorithm costs one cyber-dollar.
Let $\Pi=\{(S_1,R_1),\ldots,(S_t,R_t)\}$ be 
a respectful partition of $S$ with minimum range-partition
entropy, $\Eta(S)$.
Since $n(1+\Eta(S)) =n+n(\Eta(S))$, 
let us focus on the term
\[
n(\Eta(S)) = \sum_{S_k\in \Pi}|S_k|\log(n/|S_k|). 
\]
Thus, after charging each point in $S$ one cyber-dollar, we 
can show that 2DMaximaSet runs in time $O(n(1+\Eta(S)))$ by showing that the processing 
we perform for each 
set, $S_k\in\Pi$, contributes at most $O(|S_k|\log (n/|S_k|)+1)$ 
additional cyber-dollars to the running time of our algorithm. Let $T$ denote the
recursion tree for 2DMaximaSet,
where each node, $v$, of $T$ corresponds to a recursive call. Each node $v$ of $T$ is associated with an interval, $I_v=[a_v,b_v]$, of 
$x$-coordinates for points of $S$ between discovered maximal points for ancestors
of $v$ in $T$ (or with $b_v=p_{\max}$ for each node, $v$,
on the right spine of $T$).

So, consider a subset, $S_k\in\Pi$ which has the local property for unsorted sets. Say that $S_k$ \emph{covers} $v$ if the $x$-range for $R_k$ 
spans $I_v$ but not $I_{\mathrm{parent}(v)}$.

$S_k$ is contained in an axis-aligned box, $R_k$, 
that is strictly below the staircase. 
Thus, for any node, $v$ in $T$, if $S_k$ covers $v$, then all the
points of $S_k$ are removed from any recursive calls associated
with $v$ or its descendants in $T$, because they are all dominated by
the upper-right corner of $R_k$, which in turn is dominated by, or just is the highest point to the right side of the interval and was previously discovered to be a maxima point. If $S_k$ covers a node, $v$ in $T$, then it 
does not participate in any recursive calls for descendants of
$v$ in $T$.
Thus, the maximum number of points in $S_k$ that survive to level
$j$ in $T$ is $O(\min\{|S_k|,\lceil n/2^j\rceil\}$). 

Now consider a set
$S_k$ that has the local property for a sorted subset and its range, $R_k$. $S_k$ also
satisfies the global compatibility condition for $S_k$.
That is, $R_k$ is an axis-aligned box
that contains $S_k$ such that the points of $S_k$ are given
in sorted order in $S$ and there is no other range intersecting $R_k$. We will consider $S_k$ in two parts. Let $p$ be the rightmost point that is above $R_k$.  $S_k^{(\ell)}$ is the subset of $S_k$ that is dominated by $p$. $S_k^{(r)}$ is the remaining subset. Let $R_k^{(\ell)}$ (similarly $R_k^{(r)}$) be the rectangle that has the same $y$-range as $R_k$ but the minimum $x$-range such that it still contains $S_k^{(\ell)}$ (similarly $S_k^{(r)}$). Say that $S_k^{(\ell)}$ (similarly $S_k^{(r)}$) \emph{covers} $v$ if the $x$-range for $R_k^{(\ell)}$ (similarly $R_k^{(r)}$) 
spans $I_v$ but not $I_{\mathrm{parent}(v)}$.

If $S_k^{(\ell)}$ covers a node $v$ in $T$, $b_v$ is $p$ or above $p$ and all the points of $S_k^{(\ell)}$ have been pruned away and are removed from any recursive calls associated with $v$ and its descendants. Thus, the maximum number of points in $S_k^{(\ell)}$ that survive to level
$j$ in $T$ is $O(\min\{|S_k^{(\ell)}|,\lceil n/2^j\rceil\})$. 

If $S_k^{(r)}$ covers a node $v$ in $T$, $a_v$ and $b_v$ are in $S_k^{(r)}$, and no points above or below it remain in the recursive call associated with $v$. Since $S_k^{(r)}$ as a subset of $S_k$ is sorted by $x$-coordinate, during the recursive call associated with $v$, our algorithm recognizes this and computes the maxima set in linear time and there are no more recursive calls. Thus, the maximum number of points in $S_k^{(r)}$ that survive to level
$j$ in $T$ is $O(\min\{|S_k^{(r)}|,\lceil n/2^j\rceil\})$.

Therefore, the maximum number of points in $S_k$ that survive to level
$j$ in $T$ is $O(\min\{|S_k^{(\ell)}|,\lceil n/2^j\rceil\})+O(\min\{|S_k^{(r)}|,\lceil n/2^j\rceil\}) =O(\min\{|S_k|,\lceil n/2^j\rceil\})$. 

Let $n_j$ denote the total number of points in $S$ that survive to level
$j$ in $T$, and note that the total time (in cyber-dollars)
charged to the 2DMaximaSet algorithm is proportional to 
$\sum_{j=0}^{\lceil\log n\rceil} n_j$.
The proof follows, then,
by the following:
\begin{equation*}
\begin{split}
    \sum_{j=0}^{\lceil\log n\rceil} n_j \leq \sum_{j=0}^{\lceil \log n\rceil} \sum_k
    \min\left\{|S_k|,O\left(\frac{n}{2^j}\right)\right\}\\ \leq
    \sum_k \sum_{j=0}^{\lceil \log n\rceil}
    \min\left\{|S_k|,O\left(\frac{n}{2^j}\right)\right\}\\ \leq
    \sum_k O(|S_k|\lceil\log(n/|S_k|)\rceil + |S_k| +\frac{|S_k|}{2}
    + \frac{|S_k|}{4} + \dots + 1)\\ \leq \sum_k O(|S_k|(\lceil
    \log(n/|S_k|)\rceil +2)) ~~~\in~ O(n(\Eta(\Pi)+1))
\end{split} 
\end{equation*} 
\end{proof}

\section{2D Convex Hull}
Here we apply a similar modification to leverage sortedness in 2D convex hulls. Finding the convex hull in $\R^2$ involves identifying the smallest convex polygon that encloses all input points and returning the ordered subset of points that lie on the boundary. As with maxima sets, our algorithm follows Kirkpatrick and Seidel's approach as presented in \cite{afshani} with a modification to check for sortedness. 

The algorithm proceeds as follows. We begin by computing the upper hull. To do this, we first prune all points below the line connecting the leftmost and rightmost points of the input. Check if the points sorted. If a they are sorted, find the convex hull in linear time using Graham's scan~\cite{GRAHAM1972132}. Otherwise, the points are partitioned into two subsets based on the median $x$-coordinate, using a stable method. Next, we identify the two points that form the edge of the upper hull and intersect the vertical line at the median $x$-coordinate. This step can be done in linear time using the same method as Kirkpatrick and Seidel~\cite{kirkpatrickseidelconvex}. All points below this edge are not in the convex set and can be pruned.  Recursively solve each half. Finally, the convex hulls of both subsets are concatenated to obtain the upper hull. The lower hull is computed similarly.
See Algorithm~\ref{alg:2dhull} in Appendix A.

Given an input sequence of $n$ points, $S$, in $\R^2$,
let us define the constraints for 
a partition, $\Pi=\{(S_1,R_1),\ldots,(S_t,R_t)\}$, where
each $R_i$ is a triangle, to be
\emph{respectful} in the 
context of computing the convex hull for $S$
(see Figure~\ref{fig:conv1}):

\begin{enumerate}
\item
The \emph{local property} for each $(S_i,R_i)$ is that $R_i$
is a triangle containing the points of $S_i$
and either $S_i$
forms a sorted subsequence in $S$ or $R_i$ lies under the convex hull of $S$.
\item
The \emph{global compatibility} property is that,
for $i,j=1,2,\ldots,t$, if $R_i$ does not lie under the convex hull of $S$ (which means $S_i$ is in sorted order in $S$), then it
will not intersect another range, $R_j$, for $j\not= i$.
\end{enumerate}

As for the sorting problem, this generalizes the notion of structurally respectful partitions of Afshani {\it et al.}~\cite{afshani}, which is equivalent to 
the local condition above for unsorted subsets. Inputs whose structural entropy is much higher than their range-partition 
entropy include sets of $n$ points on a circle given in sorted order.



\begin{figure}[hbt]
    \centering
\begin{tabular}{cc}
    \includegraphics[width=0.45\linewidth,page=1]{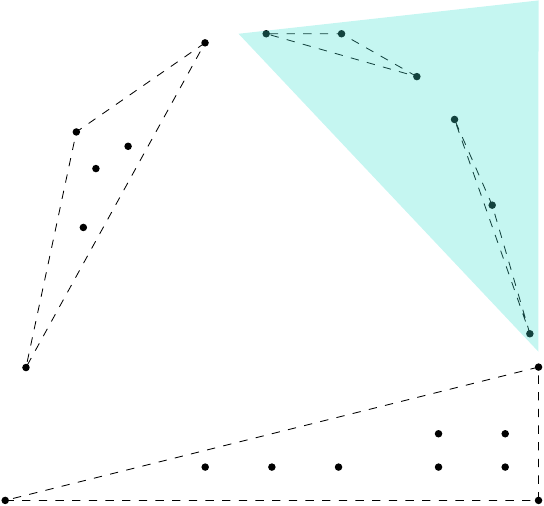} 
&
    \includegraphics[width=0.45\linewidth,page=2]{figures/hull-partitions.pdf} \\
(a) &
(b)
\end{tabular}
\caption{Respectful partitions for convex hulls.
(a) The points in the blue triangle are sorted among themselves but this partition doesn't take advantage of that. The entropy is $\frac{6}{22}\log\frac{22}{6}+\frac{3}{22}\log\frac{22}{3}+\frac{3}{22}\log\frac{22}{3}+\frac{10}{22}\log\frac{22}{10}\approx1.811$.
    (b) This partition takes advantage of the sortedness. The entropy is $\frac{6}{22}\log\frac{22}{6}+\frac{6}{22}\log\frac{22}{6}+\frac{10}{22}\log\frac{22}{10}\approx1.540$.}
    \label{fig:conv1}
\end{figure}

\begin{theorem}
    Given a set, $Q$, of $n$ points in $\R^2$ the 2DConvexHull algorithm runs in $O(n(1+ \Eta(S)))$ time, where $\Eta(S)$ is the range-partition entropy of $S$.
\end{theorem}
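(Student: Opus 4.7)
The plan is to mimic the accounting argument used in the 2DMaximaSet proof. Let $\Pi=\{(S_1,R_1),\ldots,(S_t,R_t)\}$ be a respectful triangle partition of $S$ achieving the minimum range-partition entropy $\Eta(S)$, and let $T$ be the recursion tree for 2DConvexHull, analyzed for the upper hull (the lower hull is symmetric). Each node $v\in T$ is associated with an $x$-interval $I_v=[a_v,b_v]$ of points surviving to that recursive call. After charging each input point one cyber-dollar, I would show that each pair $(S_k,R_k)$ contributes at most $O(|S_k|\log(n/|S_k|)+1)$ additional cyber-dollars by bounding, for each level $j$ of $T$, the number of points of $S_k$ that survive to that level by $O(\min\{|S_k|,\lceil n/2^j\rceil\})$. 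Summing over $k$ and $j$ using the telescoping calculation at the end of the 2DMaximaSet proof then gives the desired $O(n(1+\Eta(S)))$ bound.

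For an unsorted $(S_k,R_k)$, where the triangle $R_k$ lies strictly below the convex hull of $S$, I would reuse the Kirkpatrick--Seidel and Afshani et al.\ argument. Say that $S_k$ \emph{covers} $v$ if the $x$-range of $R_k$ spans $I_v$ but not $I_{\mathrm{parent}(v)}$. At the covering node, the median-$x$ upper-hull edge already identified at an ancestor (or at $v$ itself) lies entirely above $R_k$, so all points of $S_k$ are pruned before any descendant call executes. Hence $S_k$ contributes nothing below its covering level, which yields the claimed per-level survival bound.

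For a sorted $(S_k,R_k)$, with $R_k$ geometrically disjoint from every other $R_j$, I would imitate the $S_k^{(\ell)}$/$S_k^{(r)}$ decomposition used in the maxima proof: upper-hull edges discovered to the left and right of $R_k$'s $x$-projection prune any points of $S_k$ lying below them, while the remaining points eventually fall into a recursive call whose $x$-interval lies inside $R_k$'s $x$-projection. The key observation is that once the recursion has descended far enough that every under-hull triangle $R_j$ whose $x$-shadow overlaps this sub-strip has also been covered (and its points pruned), the surviving points inside are exactly a sorted subsequence of $S_k$; at that point the algorithm's sortedness test succeeds and Graham's scan completes the call in $O(|S_k|)$ time without further recursion. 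The main obstacle will be precisely this geometric bookkeeping for the sorted case: one must use the global triangle-disjointness hypothesis to argue that, at the level where the sortedness test fires, no foreign un-pruned points can remain inside $R_k$'s $x$-strip. Once this invariant is established, the same telescoping sum $\sum_{j=0}^{\lceil\log n\rceil}\min\{|S_k|,\lceil n/2^j\rceil\}=O(|S_k|(\log(n/|S_k|)+1))$ used in the maxima analysis closes the argument.
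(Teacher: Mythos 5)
Your high-level framework (cyber-dollars, the recursion tree $T$ with intervals $I_v$, covering nodes, and the final telescoping sum) is the same as the paper's, but both places where the triangle geometry actually has to be verified are left with gaps. In the unsorted case, the pruning mechanism you invoke is not the one the algorithm provides: a bridge edge found at an ancestor of $v$ has an $x$-range disjoint from $I_v$ (it lies between the two sibling intervals), and the bridge found at $v$ itself only removes points vertically below it, i.e.\ inside its own $x$-range around the median of $I_v$; neither ``lies entirely above $R_k$.'' What the algorithm actually prunes with at node $v$ is the chord joining the discovered hull vertices at $a_v$ and $b_v$ (the leftmost and rightmost points of that call), and that chord need not lie above all of $R_k$: it is only a chord of the hull, so the apex of an under-hull triangle can poke above it, points of $S_k$ can then survive into descendants of the covering node, and the per-level bound $O(\min\{|S_k|,\lceil n/2^j\rceil\})$ no longer follows. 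The paper closes exactly this hole by splitting $R_k$ at the vertex opposite its lowest edge into two sub-triangles $R_k^{(\ell)}$ and $R_k^{(r)}$, each bounded above by a single segment spanning its full $x$-range; at a node covered by such a piece the chord $a_vb_v$ is above that segment at both endpoints of $I_v$, hence above the whole piece, so all of its points are pruned at the covering node.

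In the sorted case you describe the intended conclusion but explicitly defer the argument (``the main obstacle will be precisely this geometric bookkeeping''), and the two-piece left/right decomposition you carry over from the maxima proof is not adequate for a triangular range: the upper hull can lie strictly above a triangle in up to two separate $x$-sections. The paper instead decomposes $S_k$ into five parts --- the two parts lying under those hull sections, which are pruned exactly as in the unsorted case, and the left/middle/right parts, for which it shows that at a covered node both $a_v$ and $b_v$ are points of $S_k$ itself and, by the global disjointness of the ranges, no other points survive in that call, so the sortedness test fires, the call finishes in linear time, and the recursion stops there. That is precisely the invariant you identify but do not establish, so as written the proposal does not yet prove the theorem, although its skeleton is the right one.
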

\begin{proof}
Let us analyze 
2DConvexHull via an accounting argument
where a constant amount of work in our algorithm costs one cyber-dollar.
Let $\Pi=\{(S_1,R_1),\ldots,(S_t,R_t)\}$ be 
a respectful partition of $S$ with minimum range-partition
entropy, $\Eta(S)$.
Since $n(1+\Eta(S)) =n+n(\Eta(S))$, 
let us focus on the term
\[
n(\Eta(S)) = \sum_{S_k\in \Pi}|S_k|\log(n/|S_k|). 
\]
Thus, after charging each point in $S$ one cyber-dollar, we 
can show that 2DConvexHull runs in time $O(n(1+\Eta(S)))$ by showing that the processing 
we perform for each 
set, $S_k\in\Pi$, contributes at most $O(|S_k|\log (n/|S_k|)+1)$ 
additional cyber-dollars to the running time of our algorithm. Let $T$ denote the
recursion tree for 2DConvexHull,
where each node, $v$, of $T$ corresponds to a recursive call. Each node $v$ of $T$ is associated with an interval, $I_v=[a_v,b_v]$, of 
$x$-coordinates for points of $S$ between discovered convex hull points for ancestors
of $v$ in $T$ (or with $b_v=p_{\max}$, the point with the largest $x$-coordinate for each node, $v$,
on the right spine of $T$).

So, consider a subset, $S_k\in\Pi$ which has the local property for unsorted sets.

$S_k$ is contained in a triangle, $R_k$, 
that is strictly below the convex hull. We will consider $S_k$ in two parts. Let $p$ be the vertex of $R_k$ that is opposite the lowest edge.  $S_k^{(\ell)}$ is the subset of $S_k$ that is to the left of $p$. $S_k^{(r)}$ is the remaining subset. Let $R_k^{(\ell)}$ (similarly $R_k^{(r)}$) be the triangle on the left (similarly right) obtained by splitting $R_k$ along $x=p_x$. $S_k^{(\ell)}$ is contained in $R_k^{(\ell)}$ and$S_k^{(r)}$ is contained in $R_k^{(r)}$. Say that $S_k^{(\ell)}$ (similarly $S_k^{(r)}$) \emph{covers} a node $v$ of $T$ if the $x$-range for $R_k^{(\ell)}$ (similarly $R_k^{(r)}$) 
spans $I_v$ but not $I_{\mathrm{parent}(v)}$.

If a node $v$ is covered by $S_k^{(\ell)}$ (similarly $S_k^{(r)}$), the edge between the convex hull points with the $x$-coordinate $a_v$ and $b_v$ lie above $R_k^{(\ell)}$ (similarly $R_k^{(r)}$). During the recursive call corresponding to $v$, all points below this edge are pruned and all points from  $S_k^{(\ell)}$ (similarly $S_k^{(r)}$) are removed from descendants of $v$.
Thus, the maximum number of points in $S_k^{(\ell)}$ (similarly $S_k^{(r)}$) that survive to level
$j$ in $T$ is $O(\min\{|S_k^{(\ell)}|,\lceil n/2^j\rceil\})$ (similarly $O(\min\{|S_k^{(r)}|,\lceil n/2^j\rceil\})$).
Therefore, the maximum number of points in $S_k$ that survive to level
$j$ in $T$ is $O(\min\{|S_k^{(\ell)}|,\lceil n/2^j\rceil\})+O(\min\{|S_k^{(r)}|,\lceil n/2^j\rceil\}) =O(\min\{|S_k|,\lceil n/2^j\rceil\})$. 

Now consider a set
$S_k$ that has the local property for a sorted subset and its range, $R_k$. $S_k$ also
satisfies the global compatibility condition for $S_k$.
That is, $R_k$ is triangle
that contains $S_k$ such that the points of $S_k$ are given
in sorted order in $S$ and there is no other range intersecting $R_k$. We will consider $S_k$ in five parts. Since $R_k$ is a triangle, there can be at most two continuous sections of the convex hull that are strictly above $R_k$. Let $S_k^{(\ell)}$, $S_k^{(m)}$, and $S_k^{(r)}$ be the subsets of $S_k$ that is to the left, middle, and right of these sections respectively. Let $S_k^{(m\ell)}$ and $S_k^{(mr)}$ be the subsets of $S_k$ that are below each of these sections. Let $R_k^{(\ell)}$, $R_k^{(m\ell)}$, $R_k^{(m)}$, $R_k^{(mr)}$, and $R_k^{(r)}$ be the intersection of $R_k$ with the corresponding $x$-ranges. Say that $S_k^{(\mathcal{X})}$\emph{covers} $v$ if the $x$-range for $R_k^{(\mathcal{X})}$ spans $I_v$ but not $I_{\mathrm{parent}(v)}$.

If $S_k^{(\mathcal{X})}$ is $S_k^{(m\ell)}$ or $S_k^{(mr)}$, and covers a node $v$ in $T$, $a_v$ and $b_v$ are points in the convex hull that are above $S_k^{(\mathcal{X})}$ and the points of $S_k^{(\mathcal{X})}$ will be pruned away and are removed from any recursive calls associated with $v$ and its descendants. Thus, the maximum number of points in $S_k^{(\mathcal{X})}$ that survive to level
$j$ in $T$ is $O(\min\{|S_k^{(\mathcal{X})}|,\lceil n/2^j\rceil\})$. 

If $S_k^{(\mathcal{X})}$ is $S_k^{(\ell)}$, $S_k^{(m)}$, or $S_k^{(r)}$, and  covers a node $v$ in $T$, $a_v$ and $b_v$ are in $S_k^{(\mathcal{X})}$, and no points above or below it remain in the recursive call associated with $v$. Since $S_k^{(\mathcal{X})}$ as a subset of $S_k$ is sorted by $x$-coordinate, during the recursive call associated with $v$, our algorithm recognizes this and computes the convex hull in linear time and there are no more recursive calls. Thus, the maximum number of points in $S_k^{(\mathcal{X})}$ that survive to level
$j$ in $T$ is $O(\min\{|S_k^{(\mathcal{X})}|,\lceil n/2^j\rceil\})$.

Therefore, the maximum number of points in $S_k$ that survive to level
$j$ in $T$ is $O(\min\{|S_k^{(\ell)}|,\lceil n/2^j\rceil\})+O(\min\{|S_k^{(m\ell)}|,\lceil n/2^j\rceil\})+O(\min\{|S_k^{(m)}|,\lceil n/2^j\rceil\})+O(\min\{|S_k^{(mr)}|,\lceil n/2^j\rceil\})+O(\min\{|S_k^{(r)}|,\lceil n/2^j\rceil\}) =O(\min\{|S_k|,\lceil n/2^j\rceil\})$. 

Let $n_j$ denote the total number of points in $S$ that survive to level
$j$ in $T$, and note that the total time (in cyber-dollars)
charged to the 2DConvexHull algorithm is proportional to 
$\sum_{j=0}^{\lceil\log n\rceil} n_j$. The proof follows, then, since
$    \sum_{j=0}^{\lceil\log n\rceil} n_j  \in O(n(\Eta(\Pi)+1)))$.
\end{proof}

\section{Visibility and Lower Envelope Problems}
In this section, we show how to
apply range-partition entropy to analyzing some
\textit{lower envelope} and \textit{visibility polygon} problems.
In these applications, there is no structural components, however, so
the range-partition entropy is the same as
the entropy in these applications
as the entropy used for the sorting problem;
hence, our algorithms 
also provide geometric applications of the entropy used for 
the sorting problem.

\paragraph{Lower envelope.} 
Given a set of $\rho$ disjoint monotone polygonal chains,
$\mathcal{S} = \{S_1, S_2, \ldots, S_\rho\}$, of total size $n$, the first problem
we study is to compute
the piecewise-linear lower envelope of the chains in $\mathcal{S}$.

Let us define the constraints for 
a partition, $\Pi=\{(S_1,R_1),\ldots,(S_t,R_t)\}$, where
each $R_i$ is the polygonal chain formed by $S_i$, to be
\emph{respectful} in the 
context of computing the lower envelope for $S$
(see Figure~\ref{fig:lower}):
\begin{enumerate}
\item
The \emph{local property} for each $(S_i,R_i)$ is that $R_i$
is the polygonal chain of $S_i$ and 
$S_i$ forms a sorted subsequence in $S$.
\item
The \emph{global compatibility} property is that,
for $i,j=1,2,\ldots,t$, $R_i$ will not intersect another range, $R_j$, for $j\not= i$
\end{enumerate}
In this context, each $R_i$ is simply the same as $S_i$, and is included only for consistency with the definition of a respectful partition used across problems. Since this application does not depend on its structural components, $R_i$ plays no additional role in the analysis.

Let $\mathcal{A}$ be
a stack-based mergesort algorithm, like
\textsc{TimSort}~\cite{peters2015timsort}, 
be a sorting algorithm that leverages monotonic runs of an input sequence, 
$X$, of $n$ elements to run in $O(n(1+\Eta(X)))$ time, where
$\Eta(X)$ is the range-partition entropy of $X$ (which is the same
as the entropy previously studied for the sorting problem).
We show how to adapt $\mathcal{A}$ to computing the lower envelope of 
the disjoint monotone chains in $\mathcal{S}$
in $O(n(1+\Eta(S)))$ time.
The algorithm, $\mathcal{A}$, works by maintaining a stack of maximal non-decreasing
runs and merging consecutive pairs of them using the merge 
algorithm from mergesort for merging two sorted sequences according to rules
based on their sizes that leads 
to the $O(n(1+\Eta(X)))$ running time.

Our first adaptation of the algorithm, $\mathcal{A}$, 
is to change the merge algorithm to be a merge of two lower envelopes,
where we merge the sequences by $x$-coordinates 
and determine, at each $x$-coordinate, the segment with the smallest $y$-value,
compressing the sequence to eliminate segment endpoints not in the lower envelope.
We assign each lower envelope a \emph{weight} calculated as the sum 
of all contributing segments to the merged result, 
rather than the current number of segments in the lower envelope. 

The adapted version of the algorithm, $\mathcal{A}$, maintains a stack of active chains.
Suppose that the stack contains chains $S_1, S_2, \dots, S_k$, with corresponding weights $w_1, w_2, \dots, w_k$. 
We iterate through the input chains, 
pushing a new chain onto the stack at each iteration, as dictated by $\mathcal{A}$. 
After each insertion, we invoke the merge procedure to restore $\mathcal{A}$'s
invariant conditions on the stack weights. 
For example, if $\mathcal{A}$ were TimSort, then
we would ensure that the following invariants holds at the end of each iteration:
\[
w_{i+2} \geq w_{i+1} + w_i, \tag{1}
\]
\[
w_{i+1} \geq w_i. \tag{2}
\]
These conditions guarantee that the weights 
of sequences on the stack in TimSort grow at least as fast as the Fibonacci numbers, 
which ensures that the height of the stack remains logarithmic~\cite{auger2019}.

To merge two chains as dictated by $\mathcal{A}$, 
we perform a linear scan over their segments, 
selecting the smallest $y$-value at each $x$-coordinate. 
Since each input chain is $x$-monotone and disjoint, 
the merging process runs in linear time respect to 
the total number of contributing input segments, up to a constant factor. 
This is analogous to merging two sorted sequences.
See Figure~\ref{fig:lower}.

\begin{figure}[hbt]
        \centering
        \includegraphics[width=.6\linewidth,page=1]{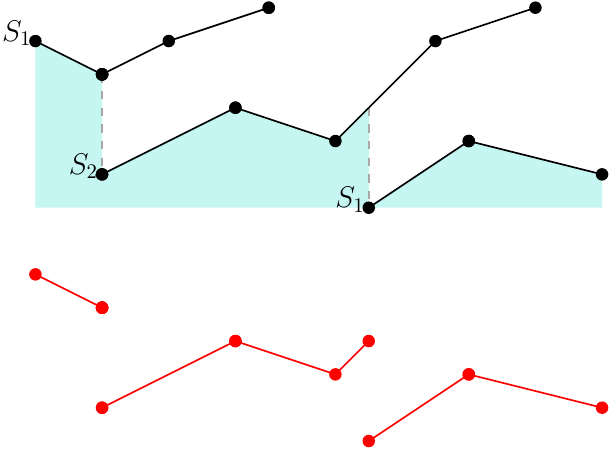}
        \caption{Merging two sets of disjoint monotone chains, where $S_1$ contains of two sequences and $S_2$ contains one. 
}
\label{fig:lower}
\end{figure}

Although the current envelope may contain fewer segments due to pruning of segments 
after some merges, the total weight still reflects the work that would 
have been done by $\mathcal{A}$ over the 
original input segments, preserving the same invariants used in $\mathcal{A}$'s
merge strategy.



\begin{theorem}
\label{thm:lower}
Given an set, $S$, of $n$ line segments partitioned into 
$\rho$ disjoint monotone chains, we compute the lower envelope 
of the chains in $S$ in $O(n(1+\Eta(S)))$ time, 
where $\Eta(S)$ is the range-partition entropy of~$S$.
\end{theorem}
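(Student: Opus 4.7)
The plan is to argue that the adapted algorithm is a direct simulation of $\mathcal{A}$ on an embedded sorting instance whose entropy matches $\Eta(S)$, and to charge every operation to the analogous operation of $\mathcal{A}$.

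First I would pin down the range-partition entropy. By the local property, each $S_i$ is a maximal sorted (i.e., $x$-monotone) subsequence, and by the global compatibility property its chain $R_i$ is disjoint from the others; so any respectful partition is exactly a partition into runs of the input chain sequence, and thus $\Eta(S) = -\sum_{i} (|S_i|/n)\log(|S_i|/n)$, identical to the sorting entropy of the sequence of $n$ segments broken at chain boundaries. In particular, $\mathcal{A}$ applied to this embedded sorting instance runs in $O(n(1+\Eta(S)))$ time by hypothesis.

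Next I would analyze the cost of a single merge. Given two stack entries representing lower envelopes $E_a,E_b$ with weights $w_a,w_b$ (the total numbers of input segments that contributed to them), one $x$-monotone sweep over the current segments of $E_a$ and $E_b$ produces the merged lower envelope in time proportional to the current sizes. Because pruning can only remove segments, the current size of each envelope is at most its weight, so the merge cost is $O(w_a+w_b)$, exactly the bound that $\mathcal{A}$'s merge of two runs of sizes $w_a,w_b$ would incur. The new envelope is assigned weight $w_a+w_b$, which again matches the run size produced by the corresponding merge in $\mathcal{A}$.

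Finally I would lift the analysis of $\mathcal{A}$. Since the merge-triggering rules are decided purely from stack weights, and those weights evolve identically to the run sizes in the embedded sorting instance, the sequence of merges performed by the adapted algorithm is in bijection with the sequence of merges performed by $\mathcal{A}$ on that instance, with each merge costing at most a constant times the cost of its counterpart. Summing over all merges and adding $O(n)$ for the initial pass that pushes the $\rho$ chains onto the stack yields a total cost of $O(n(1+\Eta(S)))$.

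The main obstacle I anticipate is the mismatch between current envelope size and weight: an adversary could imagine that pruning could somehow desynchronize the weight-based invariants from any charging scheme on actual segments. The fix, which drives the whole argument, is to charge against weights rather than current sizes; the one-directional inequality (current size at most weight) means we only ever over-pay relative to $\mathcal{A}$, so $\mathcal{A}$'s upper bound transfers without modification.
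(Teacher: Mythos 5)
Your proposal is correct and follows essentially the same route as the paper: both treat the weighted stack-based algorithm as a simulation of $\mathcal{A}$ on the embedded sorting instance, charge each envelope merge to the corresponding run merge via the weights (using that pruning only shrinks current size below weight), and invoke $\mathcal{A}$'s $O(n(1+\Eta))$ bound. Your write-up simply makes explicit the details (entropy identification, size-versus-weight inequality, bijection of merges) that the paper's terse proof leaves implicit in its preceding discussion.
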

\begin{proof}
The proof follows from the analysis of the algorithm, $\mathcal{A}$.
For example, Theorem~1 in the analysis of TimSort 
by Auger {\it et al.}~\cite{auger2019} 
shows that TimSort runs in $O(n + n\Eta(X))$ time on an input, $X$, 
of length $n$, where $\Eta(X)$ is the range-partition entropy of $X$.
the run-length distribution. 
Our adaptation of $\mathcal{A}$ into a lower envelope algorithm 
can be seen as analogous to 
$\mathcal{A}$ in the sense that each merge in our lower envelope algorithm runs
in time proportional to the time needed to perform the merges in $\mathcal{A}$.
Thus, the running time for our lower-envelope algorithm is 
$O(n(1+\Eta(S)))$.
\end{proof}

\paragraph{Visibility polygon.}  
In the next problem we consider,
we are
given a convex polygon, $P$, containing a set of convex polygonal obstacles,
with total complexity $n$,
and a query point, $q$, in~$P$'s interior, the problem is to 
compute the region of~$P$ that is visible from~$q$. 
We assume that $P$ and the convex obstacles in $P$ are provided as 
a set of $\rho$ disjoint convex chains,
$S = \{S_1, S_2, \ldots, S_\rho\}$, where each chain represents either 
the outer boundary or an obstacle.

This
visibility problem can be reduced to computing a 
lower envelope of monotone disjoint sequences, for which we provided an algorithm
above. 
Namely,
at each point along the chains, the visible segment of the polygon corresponds 
to the segment with the minimum radius with respect to~$q$ 
rather than the minimum $y$-value in the lower envelope setting. 
For each chain, we prune the vertices outside this range by 
locating the two endpoints of the visible window from~$q$. 
Once pruned, the remaining chains can be merged using a 
linear scan based on the angle each point makes with respect to~$q$. 
As before, we maintain the \emph{weight} of each chain as the sum of all contributing segments, which increases after each merge. 
See Figure~\ref{fig:vis}.

\begin{figure}[hbt]
    \centering
        \includegraphics[width=.5\linewidth,page=1]{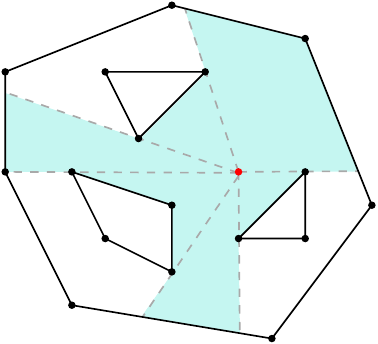}
\caption{The visibility polygon of a point among disjoint convex chains. 
}
        \label{fig:vis}
\end{figure}



\begin{theorem}
Suppose we are
given a polygon, $P$, and convex obstacles in its interior,
which are given as a collection, $S$, of $\rho$ disjoint convex 
chains consisting of a total of $n$ vertices, where each 
chain represents the outer boundary or an obstacle.
Then our visibility polygon algorithm 
runs in $O(n(1+\Eta(S)))$ time, 
where $\Eta(S)$ is the range-partition entropy of~$S$.
\end{theorem}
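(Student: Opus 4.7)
The plan is to reduce the visibility problem to the lower-envelope problem of Theorem~\ref{thm:lower} by reparametrizing the input in polar coordinates around the query point~$q$. Under this reparametrization the angular coordinate~$\theta$ plays the role of the $x$-coordinate in the lower-envelope setting, and the distance $r(\theta)$ from~$q$ to a chain at angle~$\theta$ plays the role of the $y$-coordinate; the visibility polygon is then (up to a straightforward decoding step) the lower envelope of the $r$-functions of the visible portions of the chains.

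First I would perform a preprocessing pass over each chain~$S_i$. For an obstacle chain, I find the two tangent points from~$q$ to~$S_i$, which delimit the angular window in which $S_i$ can occlude anything; for the outer boundary, I identify the angular wraparound. Because $S_i$ is convex and given in order along the chain, these tangents can be located by a single linear scan, so the pruning takes $O(|S_i|)$ time and $O(n)$ in total. The surviving portion of each $S_i$ is $\theta$-monotone with respect to~$q$ and defines a piecewise-linear function $r_i(\theta)$, so the chains are now in exactly the form required by the algorithm behind Theorem~\ref{thm:lower}.

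Next I would invoke the adapted stack-based mergesort from the lower-envelope algorithm verbatim, but with radius-from-$q$ in place of $y$-coordinate and angle-around-$q$ in place of $x$-coordinate. The merge subroutine is still a linear scan: given two $\theta$-monotone chains of total complexity~$m$, one sweeps them jointly by~$\theta$, retains the segment of smaller radius at each angular event, and outputs a chain of complexity at most~$m$ in $O(m)$ time. The stack invariants and the weight-based accounting from the lower-envelope proof carry through unchanged because they depend only on the run-length structure of the input sequence of chains, which the reduction preserves.

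The main obstacle is to bound the running time by the range-partition entropy $\Eta(S)$ of the original input rather than by some quantity measured on the reparametrized instance. For this I would argue that any respectful partition of~$S$ in the visibility sense translates, under the polar reparametrization, into a respectful partition of the reparametrized input in the lower-envelope sense with the same multiset of part sizes: the $\theta$-monotonicity of each reparametrized chain follows from convexity of the original obstacle, and the non-crossing global compatibility property is inherited from the fact that the original obstacles lie disjointly inside~$P$, so their visible portions cannot cross in polar coordinates around an interior viewpoint. Consequently the range-partition entropy does not grow under the reduction, and Theorem~\ref{thm:lower} applied to the reparametrized input yields the claimed $O(n(1+\Eta(S)))$ bound.
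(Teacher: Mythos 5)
Your proposal is correct and follows essentially the same route as the paper: reduce to the lower-envelope algorithm of Theorem~\ref{thm:lower} by pruning each convex chain to its visible angular window around~$q$ and then merging by angle with minimum radius in place of minimum $y$-value, with the weight-based stack invariants and entropy accounting carrying over unchanged. The extra detail you give on tangent-point pruning and on why the respectful partition (and hence the entropy bound) is preserved under the polar reparametrization is consistent with, and slightly more explicit than, the paper's argument.
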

\begin{proof}
The proof for the running time follows 
Theorem~\ref{thm:lower}.
Instead of selecting the minimum $y$-value at each $x$-value in the 
proof of Theorem~\ref{thm:lower}, however,
we select the minimum radius at each angle with respect to the point $q$.
Thus, our visibility polygon algorithm runs in $O(n(1+\Eta(S)))$ time.
\end{proof}

\bibliographystyle{plainurl}

\bibliography{ref}

\clearpage
\appendix

\section{Pseudocode}
In this section, we provide pseudocode descriptions of our 2D algorithms.

\begin{algorithm}
    \caption{2DMaximaSet($S$):
Given a set of $n$ points, $S$, in $\R^2$, 
compute the maxima set, $X$, of $S$.
\label{alg:maxima}
    }
    \begin{algorithmic}[1]
     \IF {$S$ is sorted \textbf{or} $n\le 1$} 
     \STATE Compute the maxima set for $S$ in $O(n)$ time, add
            the maxima points to $X$, and {\bf return}.
     \ENDIF
        \STATE Partition into $S_\ell$ and $S_r$ by median $x$-coordinate using a stable method.
        \STATE Compute $q \in S_r$ with maximum $y$-coordinate.
        \STATE Add $q$ to the output set $S$, and
        delete $q$ and prune all points in $S_\ell$ 
        and $S_r$ that are dominated by $q$.   
        \STATE Recursively compute the maxima set for $S_\ell$, and
        add the maxima set of $S_\ell$ to $X$.
        \STATE Recursively compute the maxima set for $S_r$, and
        add the maxima set of $S_r$ to $X$.
    \end{algorithmic}
\end{algorithm}

\begin{algorithm}
    \caption{2DConvexHull($S$):
Given a set of $n$ points, $S$, in $\R^2$, 
compute the convex hull, $X$, of $S$.
\label{alg:2dhull}
    }
    \begin{algorithmic}[1]
     \STATE Prune all the points strictly below the line between the leftmost and rightmost points of $S$.
     \IF {$S$ is sorted \textbf{or} $n\le 1$} 
     \STATE Compute the convex hull of $S$ in $O(n)$ time, and {\bf return}.
     \ENDIF
        \STATE Partition into $S_\ell$ and $S_r$ by median $x$-coordinate $m$ using a stable method.
        \STATE Identify edge $qq'$ of the convex hull that intersects the line $x=m$ and prune the points below it.
        \STATE Recursively compute the convex hull for $S_\ell$.
        \STATE Recursively compute the convex hull for $S_r$.
        \STATE Concatenate and return.
    \end{algorithmic}
\end{algorithm}

\section{3D Convex Hull}
\label{sec:3dhull}
In this section we simplify an algorithm of Afshani et al.~\cite{afshani} for convex hulls of points in $\mathbb{R}^3$, by finding a simple randomized replacement for a key subroutine used by Afshani et al. For the set of points $S\subset \mathbb{R}^3$, we say that a partition $\Pi$ of $S$ is structurally respectful if for each $S_k\in \Pi$, there is a tetrahedron $\Delta_k$ that contains all the points of $S_k$ and lies under the convex hull. The structural entropy $\Eta(\Pi)$ of a partition $\Pi$ is defined as $\sum_{S_k\in \Pi}(|S_k|/n)\log(n/|S_k|)$. The entropy of $S$ is the minimum over all structurally respectful partitions of $S$.

\begin{theorem}[Theorem 3.9 in~\cite{afshani}]
    Given a set, $S$, of $n$ points in $\R^3$, algorithm \texttt{hull3d} runs in $O(n(1+ \Eta(S)))$ time, where $\Eta(S)$ is the range-partition entropy of $S$.
\end{theorem}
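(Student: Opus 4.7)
My plan is to mirror the charging arguments already used in this paper for 2D maxima and 2D convex hulls, adapted to tetrahedra in $\R^3$, and following the shape of Afshani, Barbay, and Chan's argument for their Theorem 3.9 (from which our statement is quoted). Fix a structurally respectful partition $\Pi = \{(S_k, \Delta_k)\}$ of $S$ that attains $\Eta(S)$, so each $\Delta_k$ is a tetrahedron containing $S_k$ and lying strictly below the upper hull; the lower hull is symmetric.

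First, I would recall the divide-and-conquer skeleton of \texttt{hull3d}: at each recursive call the subproblem's point set lies in a slab bounded by two parallel planes; the algorithm invokes its key subroutine, which in linear time produces a fragment of the upper hull bridging the median separating plane; every input point strictly dominated by this bridge is pruned; and the algorithm recurses on each half-slab. Let $T$ be the resulting recursion tree and charge each point one cyber-dollar per level on which it survives, so the total cost is proportional to $\sum_{j} n_j$, where $n_j$ is the number of survivors at level $j$ in $T$.

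Next, I would bound how far a point of $S_k$ can descend in $T$. Associate each internal node $v$ with the slab $I_v$ it inherits from the recursion, and say that $S_k$ \emph{covers} $v$ when $I_v$ is contained in $\Delta_k$'s projection onto the normal of the median separating planes but $I_{\mathrm{parent}(v)}$ is not. Because slabs halve geometrically, $v$ sits at depth $\Omega(\log(n/|S_k|))$; because $\Delta_k$ lies strictly below the hull, the bridge computed by \texttt{hull3d} at $v$ (or at the first ancestor that cuts $\Delta_k$) dominates all of $\Delta_k$; so the pruning step removes every point of $S_k$ before it reaches any proper descendant of $v$. Hence at level $j$ at most $O(\min\{|S_k|,\, n/2^j\})$ points of $S_k$ survive. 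Summing a geometric series over $j$ gives the per-group contribution $O(|S_k|(1+\log(n/|S_k|)))$, and summing over $k$ gives $O(n(1+\Eta(\Pi)))$, which equals $O(n(1+\Eta(S)))$ by the choice of $\Pi$.

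The hard part is the key 3D subroutine itself: one must verify that the hull bridge on each slab can be computed in worst-case linear time and is complete enough that the pruning step really does eliminate every point sitting below a tetrahedron whose projection has just been split. This is the technical core of Afshani et al.'s paper; for the worst-case statement as worded I would invoke it as a black box, citing~\cite{afshani}, and defer any randomized replacement of that subroutine (with its corresponding expected-time statement) to be developed and analyzed separately in the remainder of this section.
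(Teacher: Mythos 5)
There is a genuine mismatch between the algorithm you analyze and the algorithm the theorem is about. The paper does not re-prove this statement at all: it is imported verbatim as Theorem~3.9 of Afshani~{\it et al.}, and the section's actual contribution is a randomized expected-linear-time replacement for the eight-sectioning subroutine (the octant lemma and the ensuing partition lemma). More importantly, \texttt{hull3d} as defined in the appendix is \emph{not} a marriage-before-conquest recursion on median slabs with bridges. It is an iterative partition-and-prune scheme: in round $j$ it partitions the current point set into $r_j=2^{2^j}$ groups with enclosing cells, prunes every group whose cell lies strictly below the upper hull, and after $O(\log\log n)$ rounds computes the hull of the survivors directly with a worst-case optimal algorithm. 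Your recursion tree $T$, the slabs $I_v$, the ``bridge over the median separating plane computed in linear time,'' and the per-level bound $O(\min\{|S_k|,n/2^j\})$ all presuppose a binary Kirkpatrick--Seidel-style recursion that does not exist here --- and cannot easily exist in $\R^3$, where the interface of the upper hull with a median plane is a whole polygonal cross-section rather than a single edge, so there is no linear-time ``bridge'' step to invoke as a black box.

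The missing idea is the crossing-number property of the partition, which is the crux of the 3D analysis and which your sketch never uses: each of the $r_j$ cells has $O(\log r_j)$ faces and every plane intersects only $O(r_j^{\log_8 7})$ of the cells. For a group $S_k$ with tetrahedron $\Delta_k$ below the hull, the cells wholly contained in $\Delta_k$ are pruned, so the points of $S_k$ surviving round $j$ are confined to the $O(r_j^{\log_8 7})$ cells crossed by the four bounding planes of $\Delta_k$, giving $O(\min\{|S_k|,\,n/r_j^{1-\log_8 7}\})$ survivors; summing the $O(n_j\log r_j)$ round costs over the doubly-exponential schedule $r_j=2^{2^j}$ yields $O(|S_k|(1+\log(n/|S_k|)))$ per group and hence $O(n(1+\Eta(S)))$ overall. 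If you want to prove the quoted theorem rather than cite it, you must run this argument (and, for this paper's version, account for the fact that the partition subroutine is only \emph{expected} linear time, so the resulting bound is an expectation); the 2D charging scheme you transplanted does not apply to \texttt{hull3d}.
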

In this algorithm, Afshani et al. assume an efficient subroutine to partition the points using Matoušek’s partition theorem~\cite{Avis1992} or the recursive use of the eight-sectioning theorem~\cite{octants}. The first of these methods runs in the required $O(n\log n)$ time, but the second requires $O(n^6\log n)$ time. Our contribution for this section is to show how to find an approximate eight-section in expected linear time by using random sampling and combinatorial partitioning to divide the space into fair enough octants. Recursively doing this gives us the required $O(n\log n)$ time for the subroutine. For completeness, we provide the description of the convex hull algorithm using this subroutine in \cref{3dalg}.
\begin{algorithm}
    \caption{Given a set $Q$ of $n$ points in $\mathbb{R}^3$, the algorithm finds a balanced partition of the points by dividing the space into octants as a subroutine to compute the 3D convex hull of $Q$.}
    \textbf{Algorithm} EightPartition($Q$):
    \begin{algorithmic}[1]
        \STATE Define $S$ be a subset of points, chosen uniformly at random from $Q$.
        \STATE Let $|S| \approx n^{1/10}$.

        \STATE Define $P \gets \emptyset$ to store the plane partitions
        \STATE Define $O \gets \emptyset$ to store the octant partitions
        
        \FORALL{triplet of points $(p_1, p_2, p_3) \in S$}
            \STATE Define the plane $\pi_i$ passing through $p_1, p_2, p_3$.
            \FORALL{perturbations of the plane $\pi_i$}
                \STATE Classify all points as \textit{above}, \textit{on}, or \textit{below} $\pi_i$.
                \STATE Add this partition to $P$.
            \ENDFOR
        \ENDFOR
    
        \FORALL{triplet of planes $(\pi_1, \pi_2, \pi_3) \in P$}
            \STATE Add the resulting partition to $O$.
        \ENDFOR
    
        \FORALL{octant partitions}
            \STATE Compute the number of points in each region.
            \IF{the partition is fair enough}
                \STATE Return this partition.
            \ENDIF
        \ENDFOR
    \end{algorithmic}
\end{algorithm}
\begin{lemma}\label{lem:octant}
    For $n$ points in $\mathbb{R}^3$, we can find a partition of this set into octants such that each octant has between $\frac{n}{16}$ and $\frac{3n}{16}$ points, in expected linear time. 
\end{lemma}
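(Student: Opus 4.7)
The plan is to combine the existence of exactly-fair octant partitions on the sample~$S$ (via the eight-sectioning theorem) with an $\epsilon$-approximation argument transferring fairness from $S$ to the full set~$Q$, and then to wrap everything in a standard Las--Vegas retry.

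First, I would fix $m=|S|=\lceil n^{1/10}\rceil$ and apply the eight-sectioning theorem to~$S$: there exist three planes $\pi_1^\star,\pi_2^\star,\pi_3^\star$ whose eight induced octants each contain $\lfloor m/8\rfloor$ or $\lceil m/8\rceil$ points of~$S$. A standard perturbation argument lets one assume that each $\pi_i^\star$ passes through exactly three points of~$S$, so the triple $(\pi_1^\star,\pi_2^\star,\pi_3^\star)$ lies among the $O(m^9)$ candidate octant partitions that the algorithm enumerates (the constant-many perturbations per plane absorb ties from sample points lying on the plane). Generating this list and classifying the $m$ sample points against each candidate costs $O(m\cdot m^9)=O(m^{10})=O(n)$ time, so all the sample-side work fits within the target budget.

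Next, I would argue that, with constant probability, any partition that is exactly fair on~$S$ is automatically ``fair enough'' on~$Q$. The relevant range space---octants cut out by three planes in~$\mathbb{R}^3$---is the Boolean combination of three half-space range spaces and hence has constant VC dimension. The classical $\epsilon$-approximation sampling bound then says that, for $\epsilon=1/16$ and any fixed failure probability, a sample of constant size already suffices, so $m=n^{1/10}$ is certainly enough for~$S$ to be a $(1/16)$-approximation for every octant of every enumerated candidate partition with probability at least $3/4$. Under that event, the fair-on-$S$ partition $(\pi_1^\star,\pi_2^\star,\pi_3^\star)$ satisfies
\[
\Bigl|\tfrac{|O\cap Q|}{n}-\tfrac{|O\cap S|}{m}\Bigr|\le\tfrac{1}{16}
\]
for each of its eight octants $O$, and since $|O\cap S|/m=1/8\pm O(1/m)$ this yields $|O\cap Q|\in[n/16,3n/16]$, as required. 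A single scan over~$Q$ in $O(n)$ time then verifies fairness; if it fails we resample and retry, so the expected total time is $O(n)$.

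The main obstacle is the $\epsilon$-approximation transfer: one must pin down a small enough bound on the VC dimension of the octant range space and check that the sample-complexity for a $(1/16)$-approximation at constant failure probability is comfortably within $m=n^{1/10}$. A secondary subtlety is ensuring that the eight-section of~$S$ can really be realized by planes through triples of sample points after the algorithm's constant-many perturbations, so that the enumeration is guaranteed to contain an exactly fair partition.
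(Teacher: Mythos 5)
Your proposal is correct and follows the same skeleton as the paper's proof: take a sample of size about $n^{1/10}$, enumerate the $O(n^{9/10})$ canonical octant partitions determined by triples of sample points together with constant-many perturbation classes, use the eight-sectioning theorem to guarantee that a partition fair on the sample appears in this list, check candidates against the sample in $O(n)$ total time, verify the winner against all of $Q$ by a linear scan, and retry if the check fails (so correctness is unconditional and the probability argument only controls the expected number of retries). The one place you genuinely diverge is the transfer step: the paper applies a multiplicative Chernoff bound to the sample size and then Hoeffding's inequality to a fixed subset of $cn$ points, finishing with a union bound over the eight octants of the chosen partition, whereas you invoke an $\epsilon$-approximation bound for the constant-VC-dimension range space of three-plane octants. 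Your route is slightly heavier machinery but buys uniformity: it controls all octants of all enumerated candidates simultaneously, which cleanly handles the fact that the chosen partition depends on the sample (the paper's per-subset Hoeffding statement glosses over this dependence, relying on the final verification scan to keep the algorithm correct regardless). The two ``obstacles'' you flag are not real ones: halfspaces in $\mathbb{R}^3$ have VC dimension $4$, so Boolean combinations of three of them have constant VC dimension and a $(1/16)$-approximation needs only $O(\epsilon^{-2}(d+\log(1/\delta)))=O(1)$ sample points, far below $n^{1/10}$; and the realizability of the eight-section by perturbed planes through sample triples is exactly what the paper's $27$ perturbation classes per plane provide. One small quantitative fix: with $\epsilon=1/16$ plus the $\pm O(1/m)$ rounding of the sample counts you only land on the boundary of $[n/16,3n/16]$ (possibly just outside), so take $\epsilon$ a bit smaller, say $1/32$, which changes nothing else in the argument.
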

\begin{proof}
    First, we sample the points with probability $p = n^{-9/10}$. By linearity of expectation, the expected sample size is $\mu = n^{1/10}$. By the multiplicative Chernoff bound with $\delta = 0.5$, the probability that the sample size will be between $\frac{n^{1/10}}{2}$ and $\frac{3n^{1/10}}{2}$ is at least $1 - 2e^{-\frac{n^{1/10}}{12}}$. We repeat the sampling process until the sample size falls within this range.

Partitioning planes are generated by selecting three points from this subset. These three points define a plane in 3D space. We slightly perturb this plane such that each of the three points lies either above, on, or below the plane. This results in 27 distinct ways in which the three points can define the plane. Since there are $O(n^{1/10})$ points in the sample, the number of ways to choose three points from this sample is $O(n^{3/10})$, and for each combination of three points, there are 27 distinct partitions. Therefore, we obtain at most $27n^{3/10}$ unique partitions from single planes. To construct a full division into octants, three planes are combined, resulting in up to $O(n^{9/10})$ possible partitions of the space into eight regions. 

The algorithm examines these partitions to find one that distributes the points evenly across all octants. We are guaranteed that there is such a fair partition due to \cite{octants}. Because checking whether a single partition is balanced takes $O(n^{1/10})$ time, the runtime is $O(n)$.

Consider a subset of the original set of points that has $cn$ points. The expected number of points that are sampled from this subset is $cn^{1/10}$. By a special case of Hoeffding's inequality~\cite{Hoeffding01031963}, the probability that the fraction of points that are in this subset among all points in the sample deviates from~$c$ by more than $1/16$ is less than $2e^{\frac{-n^{1/10}}{128}}$.

After getting a partition that evenly divides the sample into octants, we check if the same partition also divides all the points evenly with every octant having between $\frac{n}{16}$ and $\frac{3n}{16}$ points. By the union bound, the probability that at least one octant falls outside this range is less than $16e^{-\frac{n^{1/10}}{128}}$. If any octant falls outside this range, we repeat the sampling. The expected number of times we would have to repeat is less than $2$.
\end{proof}

Recursive application of this method gives us the following lemma which corresponds to Lemma 3.3 in \cite{afshani} in the case where the points are in $\mathbb{R}^3$. 
\begin{lemma}\label{lem:partition}
    For any set of points $Q$ in $\mathbb{R}^3$, we can partition $Q$ into $r$ subsets $Q_1,\dots,Q_r$, each of size $\Theta(n/r)$ and find $r$ polyhedral cells $\gamma_1,\dots,\gamma_r$ each with $O(\log r)$ faces such that $Q_i$ is contained in $\gamma_i$ and every plane intersects with $O(r^{\log_87})$ cells. We can do this in $O(n\log r)$ time in expectation.
\end{lemma}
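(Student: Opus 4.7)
The plan is to prove Lemma~\ref{lem:partition} by recursively applying the octant-partition subroutine from Lemma~\ref{lem:octant}, mirroring the recursive construction used by Afshani et al.\ in $\R^3$ but with our randomized subroutine as the primitive. I would build a tree $T$ of cells as follows. Start with the root cell equal to all of $\R^3$ (or a large bounding box) containing $Q$. At any internal node with cell $\gamma$ containing $m > c(n/r)$ points, invoke Lemma~\ref{lem:octant} to produce three planes that induce an octant partition of the points in $\gamma$, creating eight children $\gamma \cap O_1,\dots,\gamma \cap O_8$. Stop splitting as soon as the cell contains at most $c(n/r)$ points; these stopped cells are the output cells $\gamma_1,\dots,\gamma_r$ and their point sets are the $Q_i$.

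First I would verify the basic structural parameters. Lemma~\ref{lem:octant} guarantees each child inherits between $\tfrac{1}{16}$ and $\tfrac{3}{16}$ of the parent's points, so the recursion depth is $O(\log r)$ and the number of leaves is $\Theta(r)$ with each leaf holding $\Theta(n/r)$ points. Since each internal node contributes three half-spaces to its children, the descent from root to leaf accumulates $3 \cdot O(\log r) = O(\log r)$ bounding half-spaces, so each cell $\gamma_i$ is a convex polyhedron with $O(\log r)$ faces. The expected work at an internal node with $m$ points is $O(m)$ by Lemma~\ref{lem:octant}, and the total mass at each level of $T$ is $n$, so summing across $O(\log r)$ levels yields the advertised expected running time $O(n\log r)$.

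The main obstacle is the plane-crossing bound $O(r^{\log_8 7})$. The key fact I would invoke is the eight-partition property cited as [octants]: the three separating planes at each node can be chosen (and the approximate octant partition of Lemma~\ref{lem:octant} still realizes this guarantee) so that every hyperplane in $\R^3$ meets at most $7$ of the $8$ children. I would argue inductively that any plane $h$ crosses at most $7^d$ cells at depth $d$ in $T$; at a node crossed by $h$, at most $7$ children are crossed, each contributing at most $7^{d-1}$ crossed descendants by induction. Because the tree is balanced up to constant factors (leaves at depth $\Theta(\log_8 r)$), this yields at most $7^{O(\log_8 r)} = O(r^{\log_8 7})$ leaf cells crossed. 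Here the imbalance in child sizes forces a slight care: a leaf at depth $d_{\max}$ contributes $7^{d_{\max}}$ crossings, but since $d_{\max} = \log_{16/3} r + O(1)$ and the number of leaves is $\Theta(r)$, the weighted sum of crossings is absorbed into the same $O(r^{\log_8 7})$ bound via the standard analysis for balanced simplicial partitions with approximately-equal fractions.

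Finally I would assemble the four ingredients — $r$ cells of size $\Theta(n/r)$, $O(\log r)$ faces per cell, $O(r^{\log_8 7})$ crossings per plane, and $O(n\log r)$ expected running time — into the statement of the lemma. I expect the plane-crossing step to be the single delicate point, because it depends both on the hypothesis from [octants] that an eight-section into octants with the $7$-of-$8$ crossing property exists and on confirming that our random-sampling subroutine from Lemma~\ref{lem:octant} selects a partition still satisfying this crossing property (which it does, because it only accepts triples of planes producing balanced octants, and the eight-section theorem guarantees such triples are realized generically among balanced partitions).
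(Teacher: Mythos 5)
Your construction is the one the paper intends (the paper itself only says ``recursive application of Lemma~\ref{lem:octant}''), and most of your bookkeeping is fine: leaf sizes in $[cn/(16r),\,cn/r]$ give $\Theta(r)$ cells of size $\Theta(n/r)$, three half-spaces per level along a root--leaf path give $O(\log r)$ faces, and $O(m)$ expected work per node over $O(\log r)$ levels gives $O(n\log r)$ expected time. Also, your worry about whether the sampled partition retains the ``at most $7$ of $8$'' property is unnecessary: for \emph{any} three planes, a fourth plane is cut by them into at most $1+3+3=7$ two-dimensional pieces, each lying in a single cell, so every plane meets at most $7$ of the $8$ children regardless of the eight-sectioning theorem.

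The genuine gap is the compounding of that factor $7$ over an \emph{unbalanced} tree, which is exactly the step you wave at with ``the standard analysis \ldots is absorbed into the same $O(r^{\log_8 7})$ bound.'' It is not. With child fractions only guaranteed in $[\tfrac{1}{16},\tfrac{3}{16}]$, the tree depth can reach about $\log_{16/3} r \approx 1.24\,\log_8 r$, so the bound $7^{\mathrm{depth}}$ you invoke gives $r^{\log_{16/3}7}\approx r^{1.16}$ --- superlinear, i.e.\ worse than the trivial bound of $O(r)$ leaves. The correct way to analyze it is the recurrence $C(m)\le \max \sum C(m_i)$ over at most $7$ crossed children with $m_i\in[m/16,3m/16]$; since the uncrossed child retains only $\ge m/16$ of the mass, the worst case (seven crossed children sharing $15m/16$ equally) forces an exponent $\alpha=\log_{112/15}7\approx 0.97$, strictly larger than $\log_8 7\approx 0.936$. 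So your argument (and, implicitly, the paper's one-line justification) establishes only $O(r^{0.97})$ crossings, not the stated $O(r^{\log_8 7})$. To close this you must either tighten Lemma~\ref{lem:octant} so each octant gets $(1\pm\delta)\,m/8$ points (the same Chernoff/Hoeffding sampling argument allows any fixed $\delta>0$, yielding exponent $\log_8 7+\epsilon(\delta)$), or weaken the lemma's crossing bound to $O(r^{1-\epsilon})$ for a fixed $\epsilon>0$, which is still sublinear and suffices for the 3D convex-hull application.
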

\newpage
\section{3D Convex Hull Algorithm}\label{3dalg}

For completeness we present here \texttt{hull3d}, the algorithm of Afshani et al.~\cite{afshani} for three-dimensional convex hulls, but replacing their partitioning subroutine by our linear-time approximate eight-partition from \cref{sec:3dhull}.

\begin{algorithm}
    \caption{Given a set $Q$ of $n$ points in $\mathbb{R}^3$, the algorithm finds the convex hull of the points.}
    \textbf{Algorithm} \texttt{hull3d}($Q$):
    \begin{algorithmic}[1]
        \FOR{$j=0,1,\dots,\lfloor\log(\delta\log n)\rfloor$}
        \STATE Partition $Q$ into $r_j=2^{2^j}$ subsets $Q_1,\dots,Q_{r_j}$ and cells $\gamma_1,\dots,\gamma_{r_j}$ by \cref{lem:partition}.
        \FOR{$i=1,\dots,r_j$}
        \STATE if $\gamma_i$ is strictly below the upper hull of $Q$ then prune all points in $Q_i$ from $Q$.
            \ENDFOR
        \ENDFOR
        \STATE Compute the upper hull of the remaining points directly.
    \end{algorithmic}
\end{algorithm}
Steps 3 and 4 are done using Lemma 3.8 from \cite{afshani}, which refers to \cite{chan1996optimal,chan1996output,chan1996fixed}. Step 5 is done using any worst case optimal convex hull algorithm which takes $O(n\log n)$ time.

\ifOrphan
\section{Orphaned Proof}

This is an orphaned proof of Theorem~1.

\begin{proof}
Let us analyze 
the \textsc{2DMaximaSet} algorithm in terms of its recursive levels.
Let $X_j$ be the set of maxima points discovered by recursion level $j$
and
let $S^{(j)}$ be the set of points of $S$ that have not been pruned away 
by level $j$. 
Let $n_j=|S^{(j)}|$
and note that the \textsc{2DMaximaSet} algorithm performs $O(n_j)$ work at each
level $j$.
Thus, since there are at most $\lceil\log n\rceil$ levels, the total 
running time is proportional to $\sum_{i=0}^{\lceil\log n\rceil}n_j$. 

There are two ways in which points are added to $X_j$. 
At any recursive call, for a subset, $S'\subseteq S$,
when we divide $S'$ by a median $x$-coordinate, there are
    at most $|S'|/2$ points on the left side that are not dominated 
by the discovered maximum point, $q$; hence, at most 
$|S'|/2$ of such points can remain in $S^{(j)}$.
All points of $S'$ to the right of the median and to the
    left of the point $q$ will be pruned, and $q$ will be added to
    the maxima set; hence, none of these points remain in $S^{(j)}$.
Thus, there can be at most $\lceil n/2^j\rceil$ points
    of $S^{(j)}$ between the $x$-coordinates of two consecutive
    points in $X_j$ when the latest point was added in this way.
The other way is when $S'$ is solved in
    linear time because $S'$ is sorted
and all the maxima points in $S'$ are added to $X_j$; hence,
there are no points of $S^{(j)}$ in between any consecutive pair
in $X_j$ that were added in this manner.

    Let $\Pi$ be a respectful partition of $S$. Let $S_k\in\Pi$. If $S_k\in \Pi_1$, there is a box $B_k$ that contains $S_k$ and is strictly below the staircase. Suppose that the upper right corner of $B_k$ is between two points $q_i$ and $q_{i+1}$ of $X_j$. The only points of $S_k$ that survive level $j$ are between the $x$-coordinates of these two points. The number of points in $S_k$ that survive to level $j$ is at most $\lceil n/2^j\rceil$. If $S_k\in\Pi_2$, there is a box that contains the points of $S_k$ and is disjoint from the boxes of every $S_\ell\in\Pi$ such that $S_\ell\neq S_k$. Consider a pair of consecutive points of $X_j$ between which lie points of $S_k$ that still survive. At least one of these points must not be in $S_k$. Otherwise, no points in between would survive. If both points are not in $S_k$, the left point has to be to the left of the right edge of the box and the right point has to be to the right of the right edge of the box and there are no other pairs of consecutive points in $X_j$ that have points of $S_k$ between them. There can be only one pair of consecutive points of $X_j$ where the left point is not in $S_k$ but the right point is. Similarly, there can be only one pair of consecutive points of $X_j$ where the right point is not in $S_k$ but the left point is. Therefore, we only need to consider at most two pairs of consecutive points of $X_j$. The number of points in $S_k$ that survive to level $j$ is $O(\lceil n/2^j\rceil$). Since the $S_k$s cover all the points, we have the following summation.
    \begin{equation*}
\begin{split}
    \sum_{j=0}^n n_j \leq \sum_{j=0}^{\lceil \log n\rceil} \sum_k \min\left\{|S_k|,O\left(\frac{n}{2^j}\right)\right\}\\
    \leq \sum_k \sum_{j=0}^{\lceil \log n\rceil} \min\left\{|S_k|,O\left(\frac{n}{2^j}\right)\right\}\\
    \leq \sum_k O(|S_k|\lceil\log(n/|S_k|)\rceil + |S_k| +\frac{|S_k|}{2} + \frac{|S_k|}{4} + \dots + 1)\\
    \leq \sum_k O(|S_k|(\lceil \log(n/|S_k|)\rceil +2))\\
    \in O(n(\Eta(\Pi)+1))
\end{split}
\end{equation*}    
\end{proof}
\section{Orphaned proof of convex hull}
\begin{proof}
    Let $X_j$ be the list of convex hull points that are discovered by recursion level $j$ in left-to-right order. Let $S^{(j)}$ be the set of points that have not been pruned away by level $j$. Let $n_j=|S^{(j)}|$. The algorithm performs $O(n_j)$ work at level $j$ and there are at most $\lceil\log n\rceil$ levels. Thus, the total running time taken is within $\sum_{i=0}^{\lceil\log n\rceil}n_j$. 
    
    There are two ways in which points were added to $X_j$. When we divide the point set by the median $x$-coordinate and discover the points $q$ and $q'$. At most half of the points remain on the left side of $q$ and at most half remain on the right side of $q'$. There can be at most $\lceil n/2^j\rceil$ points of $S^{(j)}$ between the $x$-coordinates of two consecutive points in $X_j$ when the latest point was added in this case. The other case is where a sorted sequence of points is solved in linear time and all the points to be included are added. There are no points of $S^{(j)}$ in between $X_j$ that were added in this manner.

    Let $\Pi$ be a respectful partition of $S$. Let $S_k\in\Pi$. If $S_k\in \Pi_1$, there is a triangle $\Delta_k$ that contains $S_k$ and is strictly below the hull. $\Delta_k$ can intersect or lie above only $O(1)$ edges of $X_j$. Everything under an edge of $X_j$ has already been pruned. The number of points in $S_k$ that survive to level $j$ is $O(\lceil n/2^j\rceil)$. If $S_k\in\Pi_2$, there is a triangle that contains the points of $S_k$ and is disjoint from the triangles of every $S_\ell\in\Pi$ such that $S_\ell\neq S_k$. Consider a pair of consecutive points of $X_j$ between which lie points of $S_k$ that still survive. At least one of these points must not be in $S_k$. Otherwise, no points in between would survive. If both points are not in $S_k$, the left point must be to the left of the rightmost vertex of the triangle, and the right point must be to the right of this vertex, and there are no other pairs of consecutive points in $X_j$ that have points of $S_k$ between them. There can be only one pair of consecutive points of $X_j$ where the left point is not in $S_k$ but the right point is. Similarly, there can be only one pair of consecutive points of $X_j$ where the right point is not in $S_k$ but the left point is. Therefore, we only need to consider at most two pairs of consecutive points of $X_j$. The number of points in $S_k$ that survive to level $j$ is $O(\lceil n/2^j\rceil)$. Since the $S_k$s cover all the points, we have the following summation.
    \begin{equation*}
\begin{split}
    \sum_{j=0}^n n_j \leq \sum_{j=0}^{\lceil \log n\rceil} \sum_k \min\left\{|S_k|,O\left(\frac{n}{2^j}\right)\right\}\\
    \in O(n(\Eta(\Pi)+1))
\end{split}
\end{equation*}    
\end{proof}
\section{old maxima proof}

Given an input sequence of $n$ points, $S$, in $\R^2$,
let us define the constraints for 
a partition, $\Pi=\{(S_1,R_1),\ldots,(S_t,R_t)\}$, where
each $R_i$ is an axis-aligned rectangle, to be
\emph{respectful} in the 
context of computing the maxima set for $S$
(see Figure~\ref{fig:rect}):
\begin{enumerate}
\item
The \emph{local property} for each $(S_i,R_i)$ is that $R_i$
is a smallest axis-aligned rectangle containing the points of $S_i$
and either $S_i$
forms a sorted subsequence in $S$ or all the points in $S_i$
are dominated by a single point in $S_i$ (which determines the upper-right
corner of $R_i$).
\item
The \emph{global compatibility} property is that,
for $i,j=1,2,\ldots,t$, if $S_i$ is 
in not dominated by a single point in $S_i$ (which means $S_i$ is in sorted order in $S$), then it
will not intersect another range, $R_j$, for $j\not= i$.
\end{enumerate}
Incidentally,
Afshani {\it et al.}~\cite{afshani} define a notion
of structurally respectful partitions in the context
of the 2D maxima set problem, which is equivalent to 
the local condition above for unsorted subsets.
Of course, it is easy to construct input instances where their related 
definition of structural entropy is much higher than range-partition 
entropy, such as a set of $n$ maxima points given in sorted order.
\begin{proof}
Let us analyze 
2DMaximaSet via an accounting argument
where a constant amount of work in our algorithm costs one cyber-dollar.
Let $\Pi=\{(S_1,R_1),\ldots,(S_t,R_t)\}$ be 
a respectful partition of $S$ with minimum range-partition
entropy, $\Eta(S)$.
Since $n(1+\Eta(S)) =n+n(\Eta(S))$, 
let us focus on the latter term,
\[
n(\Eta(S)) = \sum_{S_k\in \Pi}|S_k|\log(n/|S_k|). 
\]
Thus, after charging each point in $S$ one cyber-dollar, we 
can show that 2DMaximaSet is instance optimal with respect to the
range-partition entropy by showing that the processing 
we perform for each 
set, $S_k\in\Pi$, contributes at most $O(|S_k|\log (n/|S_k|))$ 
additional cyber-dollars to the running time of our algorithm.
So, consider a subset, $S_k\in\Pi$, and let $T$ denote the
recursion tree for 2DMaximaSet,
where each node, $v$, of $T$ corresponds to a recursive call; hence,
each node $v$ of $T$ is associated with an interval, $I_v=[a_v,b_v]$, of 
$x$-coordinates for points of $S$ between median points for ancestors
of $v$ in $T$ (or with $b_v=p_{\max}$ for each node, $v$,
on the right spine of $T$).
Say that $S_k$ \emph{covers} $v$ if the $x$-range for $R_k$ 
spans $I_v$ but not $I_{\mathrm{parent}(v)}$.

Suppose $S_k$ has the local property for unsorted
sets; hence, $S_k$ is contained in a smallest axis-aligned box, $R_k$, 
that contains $S_k$ and is strictly below the staircase, with a maximum
point in the upper-right corner of $R_k$. 
Thus, for any node, $v$ in $T$, if $S_k$ covers $v$, then all the
points of $S_k$ are removed from any recursive calls associated
with descendants of $v$ in $T$, because they are all dominated by
the maximum point in the upper-right corner of $R_k$ (or some other
point to the right of the right boundary for $v$).

Suppose instead
$S_k$ has the local property for a sorted subset and its range, $R_k$; 
hence, $S_k$ also
satisfies the global compatibility condition for $S_k$.
That is, $R_k$ is a smallest axis-aligned box
that contains $S_k$ such that the points of $S_k$ are given
in sorted order in $S$ and there is no other range intersecting $R_k$.
Consider
a node, $v$ in $T$, such that $S_k$ covers $v$, i.e., $S_k$
spans the interval, $[a_v,b_v]$, for $v$ but not the interval for
$v$'s parent.
The point that was chosen to be part of the maxima set when $b_v$ was chosen as the median, had $y$-coordinate greater than or equal to at least one point of $S_k$. This means that no points below $R_k$ are associated with $v$ or its descendants.
If the point that was chosen to be part of the maxima set when $a_v$ was chosen as the median is in $S_k$, we know that there are no points above $R_k$.
If there are no points above $R_k$, at node $v$, the algorithm realizes that the inputs are sorted and solves the problem completely and $v$ will have no descendants.
Otherwise, there is some point above $R_k$. If such a point is chosen to be the maxima point at node $v$, 
Note that by an argument similar to that given above for the case
when $S_k$ has the local property for unsorted sets, there can
be no subset $S_j$ below $S_k$ that has not had all its points 
below $S_k$ pruned
by the time 2DMaximaSet gets to the recursive call for $v$.
Thus, the only points that can be considered for the recursive call
for $v$ are the points in $S_k$, which are in sorted order; hence,
$v$ is a leaf in the recursion tree, $T$.

Therefore, in either case, if $S_k$ covers a node, $v$ in $T$, then it 
does not participate in any recursive calls for descendants of
$v$ in $T$.
Thus, the maximum number of points in $S_k$ that survive to level
$j$ in $T$ is $O(\min\{|S_k|,\lceil n/2^j\rceil\}$). 
Let $n_j$ denote the total number of points in $S$ that survive to level
$j$ in $T$, and note that the total time (in cyber-dollars)
charged to the 2DMaximaSet algorithm is proportional to 
$\sum_{j=0}^{\lceil\log n\rceil} n_j$. 
\end{proof}
\fi
\end{document}